\definecolor{refcolorCite}{RGB}{130,0,50}
\definecolor{refcolorLink}{RGB}{0,0,150}
\definecolor{refcolorURL}{RGB}{0,0,190}
\newtheorem{theorem}{Theorem}
\newtheorem{question}{Question}
\theoremstyle{definition}
\newtheorem{lemma}{Lemma}
\newtheorem{remark}{Remark}
\newcommand{\ThesisName}{Thesis:}
\newenvironment{tagThesis}[1]
 {\tagThesisx}
 {\endtagThesisx}
\newcommand{\DefinitionName}{Definition:}
\newenvironment{tagDefinition}[1]
 {\tagDefinitionx}
 {\endtagDefinitionx}
\theoremstyle{remark}
\numberwithin{proofStep}{theorem} 
\newcommand{\orcid}[1]{\href{https://orcid.org/#1}{\textcolor[HTML]{A6CE39}{\aiOrcid}}}
\def\({\big(}
\def\){\big)}
\newcommand{\tn}{\textnormal}
\newcommand{\hilbert}{\mathcal{H}}
\newcommand{\ms}[1]{\mathscr{#1}}
\newcommand{\wh}[1]{\widehat{#1}}
\newcommand{\oper}[1]{\wh{\mathbf{#1}}}
\newcommand{\R}{\mathbb{R}}
\newcommand{\abs}[1]{\left\lvert#1\right\rvert}
\newcommand{\de}{\operatorname{d}}
\newcommand{\eg}{\textit{e.g.}\ }
\newcommand{\etc}{\textit{etc}}
\newcommand{\bra}[1]{\left\langle#1\right|}
\newcommand{\ket}[1]{\left|#1\right\rangle}
\newcommand{\braket}[2]{\langle#1|#2\rangle}
\def\sref #1{\S\ref{#1}}
\newcommand{\icoPrince}
{\raisebox{-0.75em}{\twemoji[scale=0.75]{chart increasing}}}
\newcommand{\icoPauper}
{\raisebox{-0.75em}{\twemoji[scale=0.75]{chart decreasing}}}
\begin{document}


\title{The prince and the pauper\\A quantum paradox of Hilbert-space fundamentalism}

\author{Ovidiu Cristinel Stoica\ \orcidlink{0000-0002-2765-1562}}
\affiliation{
 Dept. of Theoretical Physics, NIPNE---HH, Bucharest, Romania. \\
	Email: \href{mailto:cristi.stoica@theory.nipne.ro}{cristi.stoica@theory.nipne.ro},  \href{mailto:holotronix@gmail.com}{holotronix@gmail.com}
	}%

\date{\today}

\begin{abstract}
The quantum world is described by a unit vector in the Hilbert space and the Hamiltonian. Do these abstract basis-independent objects give a complete description of the physical world, or should we include observables like positions and momenta and the decomposition into subsystems? According to ``Hilbert-space fundamentalism'' they give a complete description, and all other features of the physical world emerge from them \citep{Carroll2021RealityAsAVectorInHilbertSpace}. Here I will give a concrete refutation of this thesis based on the symmetries of the theory of quantum measurements. These results show that even if a tensor product structure is assumed along with the unit vector and the Hamiltonian, concrete physically distinct worlds can be described by the same structures.
\end{abstract}

\keywords{Hilbert-space fundamentalism; quantum-first approaches; preferred basis problem; no-go theorem}

\maketitle

\section{Introduction}
\label{s:intro}

A quantum system, which may be the entire world, is represented by a unit vector $\ket{\psi(t)}$ called \emph{state vector}. $\ket{\psi(t)}$ belongs to a \emph{state space} $\hilbert$, a complex vector space endowed with a scalar product $\braket{\psi}{\xi}=\braket{\xi}{\psi}^\ast$, having some continuity properties that make it a \emph{Hilbert space}.

A system in the state $\ket{\psi(0)}$ changes, after a time interval $t$, according to the \emph{evolution equation}:
\begin{equation}
\label{eq:unitary_evolution}
\ket{\psi(t)}=\oper{U}_t\ket{\psi(0)},
\end{equation}
where $\oper{U}_t=e^{-i/\hbar\oper{H}t}$, $\hbar$ is the \emph{reduced Planck constant}, and $\oper{H}$ is the \emph{Hamiltonian operator}. $\oper{H}$ is time independent for closed systems, even for the entire universe. 
The \emph{evolution operators} $\oper{U}_t$ preserve the complex vector space structure and the scalar product, so they are \emph{unitary}.

\newcommand\lrefBQS{\hyperref[def:BQS]{basic quantum structure}\xspace}
\newcommand\lrefBQSs{\hyperref[def:BQS]{basic quantum structures}\xspace}
\begin{tagDefinition}{Basic quantum structures}
\label{def:BQS}
We call the triple $(\hilbert,\oper{H},\ket{\psi(t)})$ \emph{basic quantum structure}.
\end{tagDefinition}

This quantum formalism raises the following problem:
\begin{question}
\label{question:BSF-completeness}
Does the \lrefBQS $(\hilbert,\oper{H},\ket{\psi(t)})$ give a complete description of reality?
\end{question}

But a unit vector is like any other one, structureless. The structure of the world is manifest in the relation between the otherwise identical unit vectors and the operators representing observables, which encode the physical meaning.
In Quantum Mechanics  we represent different subsystems on distinct Hilbert spaces $\hilbert_1,\hilbert_2,\ldots$, and their physical properties by operators, so that we connect the theory with reality.
And we reflect this in our mathematical notations and in the informal language accompanying them.
The triple $(\hilbert,\oper{H},\ket{\psi(t)})$ is supplemented with the tensor product structure $\hilbert=\hilbert_1\otimes\hilbert_2\otimes\ldots$ and various operators on the spaces $\hilbert_1,\hilbert_2,\ldots$ to represent the observable properties.
The \lrefBQS is treated in practice as insufficient to describe the world without these additional structures. 

But some researchers think that these should not be hard-coded in the formalism, and endorse the following:

\newcommand\lrefHSF{\hyperref[thesis:HSF]{Hilbert-space fundamentalism}\xspace}
\begin{tagThesis}{Hilbert-space fundamentalism}
\label{thesis:HSF}
The tensor product structure and the physical meaning of the observables, and everything in the world, emerge uniquely from the \lrefBQS alone \citep{CarrollSingh2019MadDogEverettianism,Carroll2021RealityAsAVectorInHilbertSpace}.
This thesis, coined ``\lrefHSF'' in \citep{Carroll2021RealityAsAVectorInHilbertSpace}, is assumed in various research programs, \eg in some approaches based on decoherence where along with the pointer state the physical meaning is supposed to emerge
\citep{Zurek1991DecoherenceAndTheTransitionFromQuantumToClassical,Zurek2003DecoherenceEinselectionAndTheQuantumOriginsOfTheClassical,Schlosshauer2007DecoherenceAndTheQuantumToClassicalTransition}, claims that the tensor product structure is uniquely determined by the Hamiltonian's spectrum under conditions like the locality of the interactions \citep{Piazza2010GlimmersOfAPreGeometricPerspective,CotlerEtAl2019LocalityFromSpectrum,CarrollSingh2021QuantumMereology},  proposals that space itself emerges from these abstract structures alone \citep{CarrollSingh2019MadDogEverettianism,Carroll2021RealityAsAVectorInHilbertSpace,CarrollSingh2021QuantumMereology}, and other Quantum Gravity programs in which spacetime is supposed to emerge from the quantum structure \citep{Stoica2022SpaceThePreferredBasisCannotUniquelyEmergeFromTheQuantumStructure}.
\end{tagThesis}

\begin{figure}[!ht]
\centering
\begin{tikzpicture}[scale=1]
\foreach \i in {0,...,11}
{
    \draw[rounded corners=2.5mm,color=violet!50,fill=violet!25,rotate around={\i * 30-60:(2,2)}] (1.75,1.75) rectangle ++(2.0,0.5);
		\draw [very thick,color=red!50!black!50,fill=red!25,rotate around={\i * 30-60:(2,2)}] (3.5,2) circle (1.0mm);
}
\draw [very thick,color=orange!50!black,fill=orange] (2,2) circle (1.0mm);
\node at (2,1.4) {$(\hilbert,\oper{H},\ket{\psi(t)})$};
\draw[->,thick] (2,1.55) -- ++(0,0.25);
\path[postaction={decoration={
                    text along path,
                    text={Underdetermined emerging solutions},
                    text align=center,
                    reverse path},
      decorate}] (3.9,2) arc (0:180:1.9cm);

\begin{scope}[shift={(3.5,0)}]
\draw[rounded corners=2.5mm,color=violet,fill=violet!50] (1.75,1.75) rectangle ++(2.0,0.5);
\draw [very thick,color=orange!50!black,fill=orange] (2,2) circle (1.0mm);
\draw [very thick,color=red!50!black,fill=red] (3.5,2) circle (1.0mm);
\node at (2,1.4) {$(\hilbert,\oper{H},\ket{\psi(t)})$};
\node[align=center] at (3.5,2.9){TPS and \\ Observables};
\draw[->,thick] (2,1.55) -- ++(0,0.25);
\draw[->,thick] (3.5,2.45) -- ++(0,-0.25);
\end{scope}
\end{tikzpicture}
\caption{\textbf{Left:} The \lrefBQS specifies incompletely the physical reality.
\textbf{Right:} But it can be completed by including the tensor product structure and the observables.}
\label{fig:complete}
\end{figure}
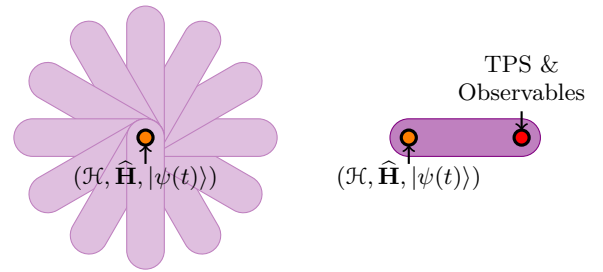

\lrefHSF was refuted in \citep{Stoica2022SpaceThePreferredBasisCannotUniquelyEmergeFromTheQuantumStructure}. In particular, the $3$d space, the tensor product structure, and a preferred basis can't emerge uniquely from the \lrefBQS. Also numerous counterexamples can be found in \citep{Stoica2024EmpiricalAdequacyOfTimeOperatorCC2HamiltonianGeneratingTranslations}.
A proof that the tensor product structure doesn't emerge uniquely from the Hamiltonian was given in \citep{Stoica2024DoesTheHamiltonianDetermineTheTPSAndThe3dSpace}.
The \lrefBQS gives an incomplete description of reality, but it can be completed by including the tensor product structure and the observables (Figure \ref{fig:complete}).

In Section \sref{s:prince-pauper} I show that the same state vector and the same Hamiltonian describe physically distinct realities (not to be confused with the many-worlds in Everett's interpretation!).
This gives a very simple, intuitive and constructive proof that the answer to Question \ref{question:BSF-completeness} is negative (even if we include the tensor product structure). 
In Section \sref{s:classical-level} we will see that this ambiguity extends to the classical level of reality, even for observations that are not explicitly quantum.
Section \sref{s:discussion} concludes with a brief discussion of the implications.
The more technical parts are exiled in Appendix \sref{s:lemmas-proofs}.

\section{The prince and the pauper}
\label{s:prince-pauper}

Edward is a young dreamer with bold ideas, who wants to make positive contributions to the world. He wants to invest, thinking that money would help him achieve his purpose to improve the world. Since he has a risk-embracing attitude, he decides to let quantum measurements make financial decisions for him.
Or maybe he is just practical, not wanting to waste too much time making decisions based on incomplete information.

So whenever he thinks of choosing between two possible investments, or between buying or selling stocks, he lets quantum chance decide for him, and he faithfully bids accordingly. He can do this by making quantum measurement on qubits, or by using Vaidman's \emph{Quantum World Splitter} on the Internet \citep{Vaidman-QuantumWorldSplitter}.

Suppose that if the qubit turns out to be in the state $\ket{+}$ Edward becomes very wealthy. Let's represent the world's state in which Edward is rich like a prince by
\begin{equation}
\label{eq:prince}
\ket{\psi_{+}}=\ket{\icoPrince}.
\end{equation}

But if the opposite result $\ket{-}$ is obtained, he becomes poor like a pauper, and the world's state becomes:
\begin{equation}
\label{eq:pauper}
\ket{\psi_{-}}=\ket{\icoPauper}.
\end{equation}

This scenario is inspired by Mark Twain \citep{MarkTwain1882ThePrinceAndThePauper}.
Based on this scenario, I prove the following result:
\begin{theorem}
\label{thm:BQS-incompleteness}
The same \lrefBQS can represent an unlimited number of physically distinct realities (in which Edward's wealth ranges from very poor to extremely rich).
This is true even if we fix the tensor product structure.
\end{theorem}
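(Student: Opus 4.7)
The plan is to give an explicit, constructive proof. For any integer $n\geq 1$, I would build a single \lrefBQS --- with a fixed tensor product structure --- that is simultaneously compatible with $2^n$ physically distinct Edward-realities, then let $n$ grow to obtain the unlimited-range claim.

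Concretely, I would fix a TPS $\hilbert = \hilbert_Q \otimes \hilbert_W$, with $\hilbert_Q$ carrying $n$ qubits that Edward measures in sequence and $\hilbert_W$ carrying Edward's body, his bank account, the market, and the environment. I choose a Hamiltonian $\oper{H}$ that implements, over a time window $[0,t]$, the protocol in which the $i$th qubit's outcome unitarily triggers a pre-committed buy/sell action on the bank-account subsystem of $\hilbert_W$. The initial state is a product $\ket{\psi(0)} = \ket{\varphi_Q}\otimes\ket{W_0}$ with $\ket{\varphi_Q}$ an equal superposition over all $2^n$ computational strings. Schr\"odinger evolution then yields
\begin{equation}
\ket{\psi(t)}=\sum_{k\in\{+,-\}^{n}} c_k \, \ket{k}_Q \otimes \ket{W_k},
\end{equation}
where each $\ket{W_k}$ is a world state in which Edward's wealth takes a definite value $w_k$ determined by the outcome string $k$. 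By calibrating the investment magnitudes built into $\oper{H}$, the set $\{w_k\}$ can be made to span any prescribed range, from pauper-poor to prince-rich.

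The crucial observation is then that $(\hilbert,\oper{H},\ket{\psi(t)})$, together with the fixed TPS, is \emph{one and the same} mathematical object regardless of which outcome sequence Edward actually lives through. In any reading under which Edward has a definite wealth at time $t$ (Copenhagen-style effective collapse, {\dBB} guidance, spontaneous localization, consistent histories, and so on), each of the $2^n$ branches singles out a genuinely distinct physical reality for Edward. Hence the very same \lrefBQS faithfully represents $2^n$ physically distinct Edward-worlds, and letting $n$ grow proves Theorem \ref{thm:BQS-incompleteness}.

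The main obstacle I anticipate is forestalling the Everettian rejoinder that the $2^n$ branches are not distinct realities but coexisting facets of a single many-worlds reality. I would address this head-on by making explicit that the claim is conditional: the argument targets any interpretation in which physical reality is single-outcome --- as is tacitly assumed by most proponents of \lrefHSF when they speak of ``the'' reality that emerges from the vector in Hilbert space --- and under any such single-world reading the \lrefBQS manifestly underdetermines that reality. The remaining technical checks --- unitarity of the composite evolution, pairwise (approximate) orthogonality of the $\ket{W_k}$ via macroscopic pointer records, and actual preservation of the chosen TPS throughout --- are routine and follow standard measurement-theory templates.
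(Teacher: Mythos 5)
There is a genuine gap: your construction never produces the key object the theorem actually needs, namely a nontrivial symmetry of the \lrefBQS. In the paper, ``the same \lrefBQS represents distinct realities'' is cashed out via isomorphism: one exhibits a unitary $\oper{S}$ that commutes with $\oper{H}$, factorizes through the tensor product structure, and maps the \emph{definite-outcome} prince-state to the \emph{definite-outcome} pauper-state, $\oper{S}\ket{\icoPrince}=\ket{\icoPauper}$. This is Lemma~\ref{lemma:measurement-transform}, built from the von Neumann interaction $\oper{H}=-g\,\oper{A}\otimes\oper{p}_{\mathbf{Z}}$ and the reflection $\oper{S}\ket{\lambda,a}\ket{\zeta}=\ket{-\lambda,a}\ket{-\zeta}$. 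Hence two physically distinct single worlds have identical triples $(\hilbert,\oper{H},\ket{\psi})$ up to automorphism preserving the TPS, which is exactly what \lrefHSF forbids. Your proposal instead takes one superposed post-measurement vector and counts its branches, so the multiplicity of ``realities'' comes entirely from the interpretation rather than from the mathematics, and the argument fails on both horns of the dilemma you yourself identify: under a single-outcome (collapse) reading the state at time $t$ is \emph{not} your superposition --- it is one of the $\ket{k}_Q\otimes\ket{W_k}$, a different vector for each outcome, so the \lrefBQSs are not ``one and the same'' after all; while under a no-collapse reading the \lrefHSF proponent (typically an Everettian) simply replies that the superposition is one reality with many branches, all of which do emerge from the vector. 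Declaring the argument ``conditional on single-world interpretations'' does not forestall this rejoinder; it concedes the case that matters, and in the one interpretation where the wavefunction really does remain the full superposition with a single outcome ({\dBBt}), the definite outcome is carried by ontology beyond the \lrefBQS, so invoking it presupposes rather than proves the incompleteness.

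As it stands your route establishes at most a restatement of the measurement problem, not the theorem. To repair it you must compare the distinct definite-outcome states directly and show that they are related by an automorphism of $(\hilbert,\oper{H})$ that preserves the tensor product structure --- i.e., you must prove something like Lemma~\ref{lemma:measurement-transform}. That commuting unitary is the entire mathematical content of the theorem (the ``unlimited number'' then follows by composing such swaps over independent measurements, much as in your $2^n$ bookkeeping); without it the proposal does not go through.
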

\begin{proof}
Two \lrefBQSs $(\hilbert,\oper{H},\ket{\psi(t)})$ and $(\hilbert',\oper{H}',\ket{\phi(t)}')$ are \emph{isomorphic} if there is a unitary operator $\oper{S}:\hilbert\to\hilbert'$ so that
\begin{equation}
\label{eq:BQS-isomorphism}
\begin{cases}
\oper{S}\ket{\psi(t)}=\ket{\phi(t)}'\\
\oper{S}\oper{H}\oper{S}^{-1}=\oper{H}'.\\
\end{cases}
\end{equation}

Then, for any basis $\{\ket{\alpha_1},\ket{\alpha_2},\ldots\}$ of $\hilbert$, for all $j$ and $k$, if $\ket{\beta_j}:=\oper{S}\ket{\alpha_j}$,
\begin{equation}
\label{eq:BQS-isomorphism-basis}
\begin{aligned}
\braket{\alpha_j}{\psi(t)}&=\braket{\beta_j}{\phi(t)}' \\
\bra{\alpha_j}\oper{H}\ket{\alpha_k}&=\bra{\beta_j}\oper{H}'\ket{\beta_k}.
\end{aligned}
\end{equation}

But $\ket{\psi(t)}$ and $\ket{\phi(t)}'$ represent identical physical worlds only if there are such bases made of eigenvectors of operators representing the same physical properties. For example both can be position or momentum eigenbases.
If \lrefHSF is true, such operators should emerge uniquely, and any two isomorphic \lrefBQSs should represent the same reality.

So if we will show that the \lrefBQS $(\hilbert,\oper{H},\ket{\psi_{+}(t)})$ is isomorphic to $(\hilbert,\oper{H},\ket{\psi_{-}(t)})$, where $\ket{\psi_{+}(t)}$ and $\ket{\psi_{-}(t)}$ are two physically distinct worlds, this will refute \lrefHSF.

Let the universe before Edward made the qubit measurement be in the state
\begin{equation}
\label{eq:ready}
\ket{Q}\ket{\tn{ready}},
\end{equation}
where, for simplicity, $\ket{\tn{ready}}$ represents not only the measuring device in its ``ready'' state, but also the entire universe minus the measured qubit.

If $\ket{Q}=\ket{+}$, the evolution law gives
\begin{equation}
\label{eq:ready2up}
\ket{+}\ket{\tn{ready}}\mapsto\ket{\psi_{+}}=\ket{\icoPrince},
\end{equation}
while if $\ket{Q}=\ket{-}$, it gives
\begin{equation}
\label{eq:ready2down}
\ket{-}\ket{\tn{ready}}\mapsto\ket{\psi_{-}}=\ket{\icoPauper}.
\end{equation}

From Lemma \ref{lemma:measurement-transform} (see Appendix \sref{s:lemmas-proofs}), there is a unitary transformation $\oper{S}$ of the total Hilbert space $\hilbert$ that preserves the Hamiltonian $\oper{H}$ -- as in equation \eqref{eq:BQS-isomorphism} -- so that
\begin{equation}
\label{eq:qubit-change-condition-thm}
\oper{S}\ket{\icoPrince}=\ket{\icoPauper}.
\end{equation}

Then, $\oper{S}$ is a \lrefBQS automorphism
\begin{equation}
\label{eq:iso-ambiguity}
\oper{S}:(\hilbert,\oper{H},\ket{\psi_{+}(t)})\to(\hilbert,\oper{H},\ket{\psi_{-}(t)}).
\end{equation}

Therefore, the same \lrefBQS $(\hilbert,\oper{H},\ket{\psi_{+}(t)})$ represents both a world in which Edward is rich, and a world in which he is poor.
According to the transformation from Lemma \ref{lemma:measurement-transform}, both worlds have the same tensor product structure.
All that the transformation $\oper{S}$ did was to invert the outcomes of the measurement.

Moreover, Edward can make unlimitedly many financial decisions by using qubit measurements. Therefore, if these are independent measurements of different qubits, the number of physically distinct alternative worlds represented by the same \lrefBQS has an unlimited exponential growth.

All these worlds may be physically very different, from containing a bankrupt Edward living on the street, to versions of Edward that built various financial empires and failed various businesses, depending on the results of the quantum measurements.
\end{proof}

Therefore, the \lrefBQS doesn't give a complete description of reality, and \lrefHSF is refuted.

\begin{remark}
\label{rem:underlying}
The ambiguity shown in Theorem \ref{thm:BQS-incompleteness} is due to the fact, shown in \citep{Stoica2022SpaceThePreferredBasisCannotUniquelyEmergeFromTheQuantumStructure}, that the additional observables needed for a full description of reality don't emerge uniquely from the \lrefBQS.
\qed
\end{remark}

\section{Ambiguity at the classical level}
\label{s:classical-level}

While the world is described by quantum theory, it appears to us classical, at least as long as we don't make quantum measurements.
But quantum measurements are ubiquitous, for example sight works like a quantum measurement. When we observe visually the positions and shapes of objects, sight works like a position measurement. When we observe color, it works like a momentum measurement, since the wavelength of light is proportional to the momentum.

Both the emergence of classicality at the macro level and the quantum measurements work in the same way. A quantum measurement leads to a superposition of states in which the pointer has observably distinct states. Whatever resolves this superposition, whether it is the wavefunction collapse or decoherence or another mechanism, it also ensures that at the macro level the superposition is gone, and the world appears to us classical.

The observables that represent positions and momenta have continuous spectra, the full set of real numbers $\R$. This is in contrast with the qubit observables, which have only two eigenvalues.
Lemma \ref{lemma:measurement-transform-swap} shows that the resulting states corresponding to two distinct eigenvalues are related by a unitary transformation that preserves the evolution law.

Therefore, the possible states of the macro level of reality that result from the same initial state vector can be described by the same \lrefBQS.
This provides, again, an unlimited number of concrete counterexamples to the \lrefHSF thesis.
These counterexamples don't even require explicitly quantum measurements, only naked eye observations of the world.

\section{Discussion}
\label{s:discussion}

If \lrefHSF were true, space, fields on space, the decomposition into subsystems, a preferred basis, and every other physical feature of the world would emerge uniquely from the state vector and the Hamiltonian.
Any isomorphic \lrefHSF would represent identical physical realities. 

In \citep{Stoica2022SpaceThePreferredBasisCannotUniquelyEmergeFromTheQuantumStructure} it was shown that whenever any of these structures emerges from the state vector and the Hamiltonian, it is not unique.
The only structures that can emerge uniquely can't exhibit physical differences, not even in relation with the state vector.
But the tensor product structure and the $3$d space and other structures of interest exhibit such differences. For example the wavefunction changes with respect to space and to the tensor product structure.
In addition, a specific proof refuting the claim that the tensor product structure emerges uniquely from the Hamiltonian's spectrum \citep{CotlerEtAl2019LocalityFromSpectrum,CarrollSingh2021QuantumMereology,CarrollSingh2019MadDogEverettianism,Carroll2021RealityAsAVectorInHilbertSpace} was given in \citep{Stoica2024DoesTheHamiltonianDetermineTheTPSAndThe3dSpace}, and numerous other counterexamples to \lrefHSF were given in \citep{Stoica2024EmpiricalAdequacyOfTimeOperatorCC2HamiltonianGeneratingTranslations}.

These results were shown to affect all theories that assume an affirmative answer to Question \ref{question:BSF-completeness}, whether they rely on state vector reduction  or branching (\eg the version of Everett's Interpretation coined by Carroll and Singh ``Mad-dog Everettianism''), proposals based on decoherence, and proposals that spacetime emerges from a purely quantum theory of gravity.
\emph{This doesn't mean that such approaches are useless, just that they can't give a complete description of reality.}

The proof given in \citep{Stoica2022SpaceThePreferredBasisCannotUniquelyEmergeFromTheQuantumStructure} is fully general, but it was largely ignored, maybe because it's quite abstract, using tensors on the Hilbert space and invariants, and because it's an existence proof without many constructive counterexamples.
The most intuitive and constructive counterexample given in \citep{Stoica2022SpaceThePreferredBasisCannotUniquelyEmergeFromTheQuantumStructure} uses transformations of the form $\oper{S}=\oper{U}_t$, which preserve the Hamiltonian and its relation with the state vector. This implies that the present time state vector also describes the past and future states of the world.
In  \citep{Stoica2022SpaceThePreferredBasisCannotUniquelyEmergeFromTheQuantumStructure} it was shown that there are infinitely many continuous families of unitary transformations that commute with the Hamiltonian and preserve the state vector, but they were just proven to exist, without showing how we can construct them.
By contrast, the constructions presented in this article give many more intuitive and constructive examples of alternative realities represented by the same state vector and Hamiltonian, and even the same tensor product structure.

\appendix

\section{Proofs of the Lemmas}
\label{s:lemmas-proofs}

Let us recall the \emph{standard model of quantum measurements} (see \eg \citeauthor{Mittelstaedt2004InterpretationOfQMAndMeasurementProcess} \citeyear{Mittelstaedt2004InterpretationOfQMAndMeasurementProcess}, \S 2.2(b), and \citeauthor{BuschGrabowskiLahti1995OperationalQuantumPhysics} \citeyear{BuschGrabowskiLahti1995OperationalQuantumPhysics}, \S II.3.4).
Realistic examples of such measurements are described in (\citeauthor{BuschGrabowskiLahti1995OperationalQuantumPhysics} \citeyear{BuschGrabowskiLahti1995OperationalQuantumPhysics}, \S VII), including spin measurements using the Stern-Gerlach device, photon polarization measurements, various photon counters and beam splitter experiments \etc.

In the case of spin measurements using the Stern-Gerlach apparatus, the possible outcomes of the measurement are distinguished by the region of a photographic plate hit by the observed particle.
The pointer observable corresponds to position.
Therefore, we choose a pointer operator $\oper{Z}$ with the spectrum equal to $\R$.
By working in the interaction picture, we can take the free Hamiltonians of the two systems to be zero, without loss of generality. This allows us to focus only on the interaction Hamiltonian.
The Hamiltonian is
\begin{equation}
\label{eq:measurement_hamiltonian}
\oper{H}=\oper{H}_{\tn{int}} = -g \oper{A}\otimes\oper{p}_{\mathbf{Z}},
\end{equation}
where $\oper{A}$ is the observable, $\oper{p}_{\mathbf{Z}}$ is the canonical conjugate of the pointer operator $\oper{Z}$.
The coupling $g$ is constant in the interval $\left[0,T\right]$ and negligible outside this interval.

Let $\{\ket{\lambda,a}\}_{a\in\ms{A}}$ be a set of orthonormal eigenvectors of $\oper{A}$ corresponding to the eigenvalue $\lambda$. To account for the possible degeneracy of the eigenvalues, they are indexed by a label $a\in\ms{A}$. Since all eigenspaces of $\oper{A}$ have the same dimension, we can choose the same set $\ms{A}$ for all $\lambda$. 

Since $\oper{p}_{\mathbf{Z}}$ is the canonical conjugate of $\oper{Z}$, \begin{equation}
\label{eq:pointer_ccr}
[\oper{Z},\oper{p}_{\mathbf{Z}}]=i\hbar\oper{I}.
\end{equation}

Let $\ket{\zeta}$ be the pointer eigenvector corresponding to the eigenvalue $\zeta\in\R$ of $\oper{Z}$.
The operator $\oper{p}_{\mathbf{Z}}$ generates, for any $\tau\in\R$, the translation
\begin{equation}
\label{eq:pointer_translation}
e^{-i \oper{p}_{\mathbf{Z}} \tau}\ket{\zeta}=\ket{\zeta + \tau}.
\end{equation}

Then, for any eigenvector $\ket{\lambda,a}$ of $\oper{A}$ and any time interval $t\in[0,T]$, we obtain (\citeauthor{Mittelstaedt2004InterpretationOfQMAndMeasurementProcess} \citeyear{Mittelstaedt2004InterpretationOfQMAndMeasurementProcess}, \S 2.2(b)),
\begin{equation}
\label{eq:unitary_evolution_interaction}
\begin{aligned}
\oper{U}_t\ket{\lambda,a}\ket{\zeta} 
&= e^{-\frac{i}{\hbar}\oper{H}t}\ket{\lambda,a}\ket{\zeta} \\
&= e^{\frac{i}{\hbar}g \oper{A}\otimes\oper{p}_{\mathbf{Z}}t}\ket{\lambda,a}\ket{\zeta} \\
&=\ket{\lambda,a}e^{i g t \lambda \oper{p}_{\mathbf{Z}}}\ket{\zeta} \\
&=\ket{\lambda,a}\ket{\zeta - g t \lambda}.
\end{aligned}
\end{equation}

If the ready pointer state is calibrated to be $\ket{0}$ and the resulting pointer state after the time interval $T$ is $\ket{- g T \lambda}$, the corresponding eigenvalue of $\oper{A}$ for the observed system is read from \eqref{eq:unitary_evolution_interaction} to be $\lambda$. Therefore,
\begin{equation}
\label{eq:unitary_evolution_interaction_result}
\oper{U}_t\ket{\lambda,a}\ket{\tn{ready}} 
= \ket{\lambda,a}\ket{\tn{result}=\lambda}.
\end{equation}

\begin{lemma}
\label{lemma:measurement-transform}
Let $\oper{A}$ be an observable that has $-\lambda$ as an eigenvalue whenever $\lambda$ is an eigenvalue, and whose eigenspaces have equal dimension.
This includes the cases when the spectrum of $\oper{A}$ is a continuous interval $(-\lambda_{\tn{max}},+\lambda_{\tn{max}})$ or $[-\lambda_{\tn{max}},+\lambda_{\tn{max}}]$.
Then, there is a unitary transformation $\oper{S}$ of the total Hilbert space so that
\begin{equation}
\label{eq:qubit-change-Hamiltonian}
\oper{H}\oper{S}=\oper{S}\oper{H}
\end{equation}
and
\begin{equation}
\label{eq:qubit-change-condition}
\oper{S}\ket{\lambda,a}\ket{\tn{result}=\lambda}=\ket{-\lambda,a}\ket{\tn{result}=-\lambda}.
\end{equation}
\end{lemma}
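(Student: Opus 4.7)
The plan is to construct $\oper{S}$ as a tensor product $\oper{S}=\oper{S}_A\otimes\oper{S}_Z$, where $\oper{S}_A$ acts on the measured system's Hilbert space by flipping the sign of $\oper{A}$, and $\oper{S}_Z$ is the parity (spatial reflection) operator on the pointer space, sending $\ket{\zeta}\mapsto\ket{-\zeta}$. Since $\oper{H}=-g\,\oper{A}\otimes\oper{p}_{\mathbf{Z}}$, the commutation relation \eqref{eq:qubit-change-Hamiltonian} reduces to showing $\oper{S}_A\oper{A}\oper{S}_A^{-1}=-\oper{A}$ and $\oper{S}_Z\oper{p}_{\mathbf{Z}}\oper{S}_Z^{-1}=-\oper{p}_{\mathbf{Z}}$, so that the two sign flips cancel in $\oper{S}\oper{H}\oper{S}^{-1}=\oper{H}$.

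Constructing $\oper{S}_A$ uses the hypothesis that all eigenspaces of $\oper{A}$ have the same dimension and that whenever $\lambda$ lies in the spectrum, so does $-\lambda$. Because the indexing $\{\ket{\lambda,a}\}_{a\in\ms{A}}$ of an (ordinary or generalized) eigenbasis can be chosen with the same label set $\ms{A}$ for every $\lambda$, I simply set $\oper{S}_A\ket{\lambda,a}:=\ket{-\lambda,a}$ and extend by linearity. For a purely discrete spectrum this is manifestly unitary; for a continuous spectrum such as $(-\lambda_{\tn{max}},+\lambda_{\tn{max}})$, the same prescription lifts through the direct integral (spectral) decomposition of $\oper{A}$ and still yields a unitary conjugating $\oper{A}$ into $-\oper{A}$. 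For $\oper{S}_Z$ I take the standard parity operator $\oper{P}:\ket{\zeta}\mapsto\ket{-\zeta}$ on $L^2(\R)$, which is unitary and satisfies $\oper{P}\oper{p}_{\mathbf{Z}}\oper{P}^{-1}=-\oper{p}_{\mathbf{Z}}$ because reflection in position flips the sign of momentum.

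Finally, I would verify \eqref{eq:qubit-change-condition} directly. By the derivation in \eqref{eq:unitary_evolution_interaction}, the calibrated state $\ket{\tn{result}=\lambda}$ is the pointer eigenstate at $\zeta=-gT\lambda$; parity sends it to the eigenstate at $\zeta=+gT\lambda$, which is precisely $\ket{\tn{result}=-\lambda}$. Combined with $\oper{S}_A\ket{\lambda,a}=\ket{-\lambda,a}$, the tensor product $\oper{S}_A\otimes\oper{S}_Z$ implements exactly the map asserted by \eqref{eq:qubit-change-condition}. The only real obstacle I foresee is bookkeeping for the continuous spectrum: making the eigenspace-swap into a genuinely well-defined global unitary requires selecting a measurable unitary family between the fibers at $\lambda$ and $-\lambda$ that respects the equal-multiplicity assumption, but this is a standard spectral-theoretic construction and does not affect the structure of the argument.
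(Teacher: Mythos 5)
Your construction is essentially the paper's own: the operator $\oper{S}_A\otimes\oper{S}_Z$ with $\oper{S}_A\ket{\lambda,a}=\ket{-\lambda,a}$ and $\oper{S}_Z$ the pointer parity is exactly the map $\oper{S}\ket{\lambda,a}\ket{\zeta}=\ket{-\lambda,a}\ket{-\zeta}$ used there, and your verification of \eqref{eq:qubit-change-condition} matches. The only difference is cosmetic: you check $\oper{H}\oper{S}=\oper{S}\oper{H}$ via the operator identities $\oper{S}_A\oper{A}\oper{S}_A^{-1}=-\oper{A}$ and $\oper{S}_Z\oper{p}_{\mathbf{Z}}\oper{S}_Z^{-1}=-\oper{p}_{\mathbf{Z}}$ (close in spirit to the paper's second proof in the $\ket{\lambda,a}\ket{p}$ eigenbasis), whereas the paper's first proof computes $\oper{U}_t\oper{S}=\oper{S}\oper{U}_t$ on basis vectors and takes the limit $t\searrow 0$.
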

\begin{proof}
In the standard measurement scheme, we choose the unitary transformation defined on the basis vectors of the Hilbert space by
\begin{equation}
\label{eq:qubit-change-def}
\oper{S}\ket{\lambda,a}\ket{\zeta}=\ket{-\lambda,a}\ket{-\zeta}.
\end{equation}

Then, from equation \eqref{eq:unitary_evolution_interaction} , condition \eqref{eq:qubit-change-condition} is satisfied.

For the condition \eqref{eq:qubit-change-Hamiltonian}, we notice that for all $t\in[0,T]$
\begin{equation}
\label{eq:interaction_commute_change}
\begin{aligned}
\oper{U}_t\oper{S}\ket{\lambda,a}\ket{\zeta} 
&\stackrel{\eqref{eq:qubit-change-def}}{=} \oper{U}_t\ket{-\lambda,a}\ket{-\zeta} \\
&\stackrel{\eqref{eq:unitary_evolution_interaction}}{=} \ket{-\lambda,a}\ket{-\zeta + g t \lambda}\\
&\stackrel{\eqref{eq:qubit-change-def}}{=}\oper{S}\ket{\lambda,a}\ket{\zeta - g t \lambda}\\
&\stackrel{\eqref{eq:unitary_evolution_interaction}}{=}\oper{S}\oper{U}_t\ket{\lambda,a}\ket{\zeta}.
\end{aligned}
\end{equation}

Therefore, for all $t\in[0,T]$,
\begin{equation}
\label{eq:unitary_commute_change}
\oper{U}_t\oper{S} = \oper{S}\oper{U}_t.
\end{equation}

By taking the limit $t\searrow 0$, it follows that $\oper{H}\oper{S}=\oper{S}\oper{H}$, so condition \eqref{eq:qubit-change-Hamiltonian} is satisfied too.
\end{proof}

\begin{proof}[Another proof (using eigenbases)]
We work in the basis of vectors of the form $\ket{\zeta}\ket{p}$, where $\ket{p}$ are eigenvectors of $\oper{p}_{\mathbf{Z}}$, $\oper{p}_{\mathbf{Z}}\ket{p}=p\ket{p}$, which from equation \eqref{eq:pointer_ccr} satisfy
\begin{equation}
\label{eq:fourier-p}
\braket{\zeta}{p}=\frac{1}{\sqrt{2\pi \hbar}} e^{ip\zeta/\hbar}.
\end{equation}

Then, the Hamiltonian \eqref{eq:measurement_hamiltonian} is diagonal in this basis,
\begin{equation}
\label{eq:measurement_hamiltonian_diag}
\oper{H}=\oper{H}_{\tn{int}} = -g \sum_{a,\lambda}\int_{-\infty}^\infty\lambda p\ket{\lambda,a}\ket{p}\bra{p}\bra{\lambda,a}\de p.
\end{equation}

We define the operator $\oper{S}$ by
\begin{equation}
\label{eq:qubit-change-def-diag}
\oper{S}\ket{\lambda,a}\ket{p}=\ket{-\lambda,a}\ket{-p}.
\end{equation}

This operator commutes with $\oper{H}$, because both $\ket{\lambda,a}\ket{p}$ and $\ket{-\lambda,a}\ket{-p}$ are eigenvectors of $\oper{H}$ with the same eigenvalue, $-g \lambda p=-g (-\lambda)(-p)$.
Using
\begin{equation}
\label{eq:fourier}
\ket{\zeta}=\frac{1}{\sqrt{2\pi \hbar}}\int_{-\infty}^\infty e^{-ip\zeta/\hbar}\ket{p}\de p
\end{equation}
one obtains
\begin{equation}
\label{eq:qubit-change-def-pZ}
\begin{aligned}
\oper{S}\ket{\lambda,a}\ket{\zeta}
&\stackrel{\eqref{eq:fourier}}{=}\oper{S}\ket{\lambda,a}\frac{1}{\sqrt{2\pi \hbar}}\int_{-\infty}^\infty e^{-ip\zeta/\hbar}\ket{p}\de p\\
&\stackrel{\eqref{eq:qubit-change-def-diag}}{=}\ket{-\lambda,a}\frac{1}{\sqrt{2\pi \hbar}}\int_{-\infty}^\infty e^{-ip\zeta/\hbar}\ket{-p}\de p\\
&\stackrel{\phantom{\eqref{eq:qubit-change-def-diag}}}{=}\ket{-\lambda,a}\frac{1}{\sqrt{2\pi \hbar}}\int_{-\infty}^\infty e^{+ip\zeta/\hbar}\ket{p}\de p\\
&\stackrel{\eqref{eq:fourier}}{=}\ket{-\lambda,a}\ket{-\zeta}.
\end{aligned}
\end{equation}

From this, one obtains \eqref{eq:unitary_evolution_interaction_result}.
\end{proof}

In the case when the spectrum of the observable $\oper{A}$ is $\R$, we also have the following result.
\begin{lemma}
\label{lemma:measurement-transform-swap}
If the spectrum of $\oper{A}$ is $\R$ and the eigenspaces have equal dimension, for any pair of non-null eigenvalues $\lambda_1\neq\lambda_2$ there is a unitary transformation $\oper{S}$ of the total Hilbert space so that $\oper{H}\oper{S}=\oper{S}\oper{H}$ and
\begin{equation}
\label{eq:qubit-change-condition-swap}
\oper{S}\ket{\lambda_1}\ket{\tn{result}=\lambda_1}=\abs{\frac{\lambda_2}{\lambda_1}}\ket{\lambda_2}\ket{\tn{result}=\lambda_2}.
\end{equation}
\end{lemma}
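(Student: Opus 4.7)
The plan is to build on the diagonalization used in the second proof of Lemma \ref{lemma:measurement-transform}. In the basis $\{\ket{\lambda,a}\ket{p}\}$ the Hamiltonian $\oper{H}=-g\oper{A}\otimes\oper{p}_{\mathbf{Z}}$ is diagonal with eigenvalue $-g\lambda p$, so any unitary that permutes generalized eigenvectors within level sets of the product $\lambda p$ automatically commutes with $\oper{H}$. The key observation is that the dilation $(\lambda,p)\mapsto(\mu\lambda,p/\mu)$ preserves $\lambda p$ for every non-zero $\mu\in\R$. Choosing $\mu:=\lambda_2/\lambda_1$ sends $\lambda_1$ to $\lambda_2$, so I expect this dilation to produce the required swap of post-measurement branches.

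Concretely, I would define $\oper{S}$ on the generalized eigenbasis by
\begin{equation}
\oper{S}\ket{\lambda,a}\ket{p}:=\ket{\mu\lambda,a}\ket{p/\mu},
\end{equation}
and then verify three things. First, unitarity: on delta-normalized vectors, rescaling $\lambda$ gives $\delta(\mu\lambda-\mu\lambda')=|\mu|^{-1}\delta(\lambda-\lambda')$, while rescaling $p$ gives the reciprocal $|\mu|\,\delta(p-p')$, so the two Jacobians cancel and inner products are preserved. Second, the commutation $\oper{H}\oper{S}=\oper{S}\oper{H}$ is immediate term-by-term from $(\mu\lambda)(p/\mu)=\lambda p$, exactly paralleling the argument around \eqref{eq:qubit-change-def-diag}. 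Third, the hypothesis that every eigenspace of $\oper{A}$ has the same dimension ensures that the label $a$ can be carried across the map in an orthonormality-preserving way.

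The final and most delicate step is to evaluate $\oper{S}$ on $\ket{\lambda_1,a}\ket{\tn{result}=\lambda_1}=\ket{\lambda_1,a}\ket{\zeta=-gT\lambda_1}$. I would expand $\ket{\zeta}$ using the Fourier relation \eqref{eq:fourier}, pull $\oper{S}$ under the integral, and substitute $p\mapsto\mu q$; the Jacobian contributes an overall $|\mu|$ and turns the exponent $-ip\zeta/\hbar$ into $-iq(\mu\zeta)/\hbar$. Recognizing $\ket{\mu\zeta}=\ket{-gT\mu\lambda_1}=\ket{\tn{result}=\lambda_2}$ then yields
\begin{equation}
\oper{S}\ket{\lambda_1,a}\ket{\tn{result}=\lambda_1}=|\mu|\,\ket{\lambda_2,a}\ket{\tn{result}=\lambda_2}=\abs{\frac{\lambda_2}{\lambda_1}}\,\ket{\lambda_2,a}\ket{\tn{result}=\lambda_2}.
\end{equation}
The main obstacle is interpretive rather than computational: the factor $\abs{\lambda_2/\lambda_1}$ must be recognized as an inherent Jacobian, not a failure of unitarity. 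It appears because $\oper{S}$ dilates the delta-normalized pointer eigenvector $\ket{\zeta}$, whereas the basis $\{\ket{\lambda,a}\ket{p}\}$ on which $\oper{S}$ was defined is delta-normalized in a way that makes the two Jacobians cancel. Once this point is isolated, the argument is a direct parallel of the second proof of Lemma \ref{lemma:measurement-transform}, with the sign-flip $(\lambda,p)\mapsto(-\lambda,-p)$ replaced by the dilation $(\lambda,p)\mapsto(\mu\lambda,p/\mu)$.
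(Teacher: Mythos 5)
Your proof is correct, and it reaches the paper's conclusion by a genuinely different (though closely parallel) route. You construct the same operator as the paper --- your momentum-basis definition $\oper{S}\ket{\lambda,a}\ket{p}=\ket{\mu\lambda,a}\ket{p/\mu}$ with $\mu=\lambda_2/\lambda_1$ Fourier-transforms, as you compute, to exactly the paper's position-basis definition \eqref{eq:qubit-change-def-swap}, prefactor $\abs{\lambda_2/\lambda_1}$ included --- but you prove commutation with $\oper{H}$ the way the paper's \emph{second} proof of Lemma \ref{lemma:measurement-transform} does, by observing that the dilation $(\lambda,p)\mapsto(\mu\lambda,p/\mu)$ preserves the eigenvalue $-g\lambda p$ of the diagonalized Hamiltonian \eqref{eq:measurement_hamiltonian_diag}. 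The paper's own proof of Lemma \ref{lemma:measurement-transform-swap} instead parallels its \emph{first} proof of Lemma \ref{lemma:measurement-transform}: it defines $\oper{S}$ directly on $\ket{\lambda,a}\ket{\zeta}$ with the $\abs{\lambda_2/\lambda_1}$ factor put in by hand, checks $\oper{U}_t\oper{S}=\oper{S}\oper{U}_t$ term by term using the explicit evolution \eqref{eq:unitary_evolution_interaction}, and takes $t\searrow 0$. The paper's route is more elementary and yields the slightly stronger statement that $\oper{S}$ commutes with the full propagator on $[0,T]$; your route makes the commutation with $\oper{H}$ structurally transparent (level sets of $\lambda p$), verifies unitarity explicitly via the cancelling Jacobians, and derives the $\abs{\lambda_2/\lambda_1}$ prefactor rather than asserting it, which is a cleaner account than the paper's one-line appeal to ``the scaling property of the Dirac $\delta$ distribution.'' Your use of the non-nullity of $\lambda_1$ (so that $\mu$ exists and is non-zero) and of the equal-dimension hypothesis (to carry the label $a$ across) matches the lemma's hypotheses.
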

\begin{proof}
This is achieved by the unitary transformation
\begin{equation}
\label{eq:qubit-change-def-swap}
\oper{S}\ket{\lambda,a}\ket{\zeta}=\abs{\frac{\lambda_2}{\lambda_1}}\ket{\frac{\lambda_2}{\lambda_1}\lambda,a}\ket{\frac{\lambda_2}{\lambda_1}\zeta},
\end{equation}
where the coefficient $\abs{\frac{\lambda_2}{\lambda_1}}$ is due to the scaling property of the Dirac $\delta$ distribution.
Then,
\begin{equation}
\label{eq:interaction_commute_change-swap}
\begin{array}{rcl}
\oper{U}_t\oper{S}\ket{\lambda,a}\ket{\zeta}
&\stackrel{\eqref{eq:qubit-change-def-swap}}{=}& \abs{\frac{\lambda_2}{\lambda_1}}\oper{U}_t\ket{\frac{\lambda_2}{\lambda_1}\lambda,a}\ket{\frac{\lambda_2}{\lambda_1}\zeta} \\
&\stackrel{\eqref{eq:unitary_evolution_interaction}}{=}& \abs{\frac{\lambda_2}{\lambda_1}}\ket{\frac{\lambda_2}{\lambda_1}\lambda,a}\ket{\frac{\lambda_2}{\lambda_1}\zeta - g t \frac{\lambda_2}{\lambda_1}\lambda}\\
&\stackrel{\eqref{eq:qubit-change-def-swap}}{=}& \oper{S}\ket{\lambda,a}\ket{\zeta - g t \lambda}\\
&\stackrel{\eqref{eq:unitary_evolution_interaction}}{=}& \oper{S}\oper{U}_t\ket{\lambda,a}\ket{\zeta}.
\end{array}
\end{equation}

Therefore, $\oper{U}_t\oper{S} = \oper{S}\oper{U}_t$ for all $t\in[0,T]$ and the limit $t\searrow 0$ gives $\oper{H}\oper{S}=\oper{S}\oper{H}$.
\end{proof}

\vspace{-0.5cm}

%
%
%
%
%
%


\end{document}